\newcommand{\abs}[1]{\left|#1\right|}
\newcommand{\norm}[1]{\left\|#1\right\|}
\newcommand{\tr}{\mathrm{tr}}
\newcommand{\rk}{\mathrm{rank}}
\newcommand{\diag}{\mathrm{diag}}
\newcommand{\calC}{\mathcal{C}}
\newcommand{\Hm}{\mathscr{H}}
\newcommand{\calP}{\mathcal{P}}
\newcommand{\poly}{\mathrm{poly}}
\newcommand{\LHS}{\mathrm{LHS}}
\newcommand{\ketbrasame}[1]{|#1\rangle\langle#1|}
\theoremstyle{definition}
\newtheorem{proposition}{Proposition}
\newtheorem{definition}{Definition}
\crefname{fact}{Fact}{Facts}
\crefname{prog}{Program}{Programs}
\def\blfootnote{\gdef\@thefnmark{}\@footnotetext}
\begin{document}

\title{Energy Spectra of Compressed Quantum States}

\author{Daochen Wang\,\orcidlink{0000-0001-5472-1207}\textsuperscript{\hyperref[emailnote]{*}}}
\affiliation{%
University of British Columbia, Vancouver, Canada
}

\begin{abstract}
Quantum algorithms for estimating the ground state energy of a quantum system often operate by preparing a classically accessible quantum state and then applying quantum phase estimation. Whether this approach yields quantum advantage hinges on the state's energy spectrum, that is, the sequence of the state's overlaps with the energy eigenstates of the system Hamiltonian. We show that the energy spectrum of \emph{any} entanglement-compressed quantum state must have large support if most energy eigenstates are highly entangled, an assumption supported by the eigenstate thermalization hypothesis.  Furthermore, we show that if the compressed quantum state minimizes expected energy, then its energy spectrum decays with the inverse-squared energy eigenvalues under a convex relaxation of the compression constraint. This explains the main empirical finding of Silvester, Carleo, and White~\cite{unusual_white_2025} (\textit{Physical Review Letters}, 2025) that the energy spectra of matrix product states do not decay exponentially.
\end{abstract}
\maketitle
\blfootnote{\phantomsection\label{emailnote}\textsuperscript{*}Contact author: \href{mailto:wdaochen@gmail.com}{wdaochen@gmail.com}}

Estimating ground state energies of quantum systems is widely regarded as a promising application of quantum computation \cite{elucidataing_reiher_2017,qchem_survey_cao_2019,qchem_survey_bauer_2020,qchem_survey_mcardle_2020,catalysis_vonburg_2021,qchem_riverlane_2022,evaluating_chan_2023,drug_santagati_2024,spiers_chan_24,fermions_anschuetz_2024,dissipative_lin_2025}. Yet, without further qualifying the system, this problem is believed to be intractable even for quantum computers \cite{yellow_ksv_2002,complexity_watrous_2009}. This raises a fundamental question: \emph{for which} systems could quantum computers offer an advantage over classical computers?

A leading quantum algorithmic paradigm for ground state energy estimation \cite{qpe_abrams_lloyd_1999,qpe_aspuruguzik_2005,quartic_hastings_2020,dequantize_gharibian_2023,improved_glhp_cade_2023,sparse_hamiltonian_chen_2024,probing_choi_izmaylov_2024,state_prep_fomichev_2024,state_prep_berry_2025,lhp_macomplete_jiang_2025,qpe_overlap_lin_izmaylov_2025,van_vleck_simon_2025,glhp_instantiated_schmidhuber_2025,beating_legall_2025} operates in two steps: (i) prepare a classically accessible initial state $\ket{\psi}$, and (ii) apply a standard quantum subroutine (typically, phase estimation \cite{qpe_kitaev_1995}) to produce an energy estimate. To yield quantum advantage, $\ket{\psi}$'s expected energy should be far from the ground state energy \emph{yet} $\ket{\psi}$'s overlap (i.e., squared inner product) with the ground state should also be high. The first condition prevents classical algorithms from succeeding via step (i), while the second condition makes step (ii) quantumly efficient. For both conditions to be met, it is necessary and sufficient for $\ket{\psi}$ to have high overlap with the ground state \emph{yet} nonnegligible overlap with excited states. The sequence of overlaps is known as $\ket{\psi}$'s \emph{energy spectrum} (or distribution) \cite{state_prep_fomichev_2024,unusual_white_2025}.

We focus on states $\ket{\psi}$ with limited entanglement, known as \emph{compressed quantum states} \cite{unusual_white_2025}. These states constitute a substantial family of classically accessible states that includes matrix product states (MPSs) and tensor network states \cite{mps_schollwock_2011,tns_eisert_2013,tns_orus_2019,mps_review_cirac_2021}. We show that $\ket{\psi}$'s energy spectrum must have large support if most energy eigenstates are highly entangled, an assumption supported by the eigenstate thermalization hypothesis. Furthermore, we show that if $\ket{\psi}$ minimizes expected energy then its energy spectrum decays with the inverse-squared energy eigenvalues under a convex relaxation of the compression constraint. This explains the main empirical finding of Ref.~\cite{unusual_white_2025}  that the energy spectra of MPSs do not decay exponentially. We end by discussing how the result relates to the initial question on quantum advantage.

\paragraph{Energy spectra via entanglement.}

Key to our analysis is a robust notion of matrix rank known as stable rank, as introduced by Rudelson and Vershynin~\cite{stable_rank_rudelson_2007,stable_rank_ipsen_2025}.

\begin{definition}[Stable rank]
Let $L,R\in \mathbb{N}$. The \emph{stable rank} of $0\neq A\in \mathbb{C}^{L\times R}$, denoted $\chi(A)$, is defined to be 
\begin{equation}
    \chi(A) \coloneqq \norm{A}_F^2 \, \big/ \, \norm{A}^2,
\end{equation} 
where $\norm{A}_F$ denotes the Frobenius norm of $A$ and $\norm{A}$ denotes the spectral norm of $A$. The stable rank of every matrix with all entries equal to $0$ is defined to be $0$.
\end{definition}
Note that the stable rank is at most the rank because $\norm{A}_F^2 = \sum_{i=1}^{\rk(A)} \sigma_i^2$ and $\norm{A}^2 = \max_{i=1}^{\rk(A)} \sigma_i^2$, where the $\sigma_i$s are the singular values of $A$.

We adapt stable rank to measure the entanglement of quantum states. We focus on qubit states for simplicity; generalizations are straightforward. For $n\in \mathbb{N}$, we write $[n] \coloneqq \{1,\dots,n\}$. We say $A,B\subset [n]$ is a bipartition if $A\cup B=[n]$, $A\cap B = \emptyset$, and $A,B\neq [n]$.

\begin{definition}[Stable Schmidt rank]\label{def:stable_bond_dim}
    Let $n\geq 2$ be an integer. Let $A,B\subset [n]$ be a bipartition. Let $\ket{\psi}$ be an $n$-qubit state. The \emph{stable Schmidt rank of $\ket{\psi}$ with respect to $A,B$}, denoted $\chi_{A,B}(\ket{\psi})$, is defined to be the stable rank of the unique matrix $\Gamma \in \mathbb{C}^{\{0,1\}^A \times \{0,1\}^B}$ such that 
    \begin{equation}
        \ket{\psi} = \sum_{x \in \{0,1\}^A, \, y\in \{0,1\}^B} \Gamma_{x,y}\ket{x}_A\ket{y}_B.
    \end{equation}
\end{definition}
    We will often omit the subscript $A,B$ from $\chi_{A,B}$ when $A,B$ are clear from context or are immaterial.
    
    Since stable rank is at most the rank, $\chi(\ket{\psi})$ is at most $\ket{\psi}$'s Schmidt rank. Moreover, $\chi(\ket{\psi})$ can be viewed as an alias for the entanglement \emph{min-entropy} of $\ket{\psi}$: if we write $\rho_A$ and $\rho_B$ for the reduced density matrices of $\ket{\psi}$ on $A$ and $B$, and $S_{\min}(\rho) \coloneqq \log_2(1/\norm{\rho})$ for the min-entropy of a density matrix $\rho$, then
 \begin{align}\label{eq:srank_minentropy}
     \chi_{A,B}(\ket{\psi}) =&\frac{1}{\norm{\rho_A}} = \frac{1}{\norm{\rho_B}} =  2^{S_{\min}(\rho_A)} = 2^{S_{\min}(\rho_B)},
 \end{align}
 which follows from $\rho_A = \Gamma\Gamma^\dagger$ and $\rho_B = (\Gamma^\dagger\Gamma)^{\intercal}$, and $\norm{\Gamma}_F^2 = \tr[\Gamma\Gamma^\dagger] = \tr[(\Gamma^\dagger\Gamma)^{\intercal}] =\tr[\rho_A] =\tr[\rho_B] = 1$ and $\norm{\Gamma}^2 = \norm{\Gamma\Gamma^{\dagger}} = \norm{(\Gamma^\dagger\Gamma)^{\intercal}} = \norm{\rho_A} = \norm{\rho_B}$.
 
 Since $S_{\min}(\rho) \in [0,\log_2(\dim(\rho))]$, we have $\chi_{A,B}(\ket{\psi}) \in [1, 2^{n/2}]$ for all $n$-qubit $\ket{\psi}$ and all bipartitions $A,B\subset [n]$.
 
 For all $\alpha\in [0,\infty]$, we have $S_{\alpha}(\rho) \geq S_{\min}(\rho) $, where $S_{\alpha}(\rho) \coloneqq (1-\alpha)^{-1}\log_2(\tr[\rho^\alpha])$  is the $\alpha$-R\'enyi entropy of $\rho$. Moreover, if $\alpha>1$, then $S_{\min}(\rho) \geq (1-1/\alpha)S_{\alpha}(\rho)$ \footnote{If $0\leq \alpha \leq 1$, no constant $c>0$ exists such that $S_{\min}(\rho) \geq c S_{\alpha}(\rho)$ for all $\rho$. For example, take $\rho \in \mathbb{C}^{d\times d}$ to be  $\diag(2^{-1}, (2d-2)^{-1},\dots, (2d-2)^{-1})$, in which case $S_{\min}(\rho) = 1$ but $S_\alpha(\rho) \geq S_1(\rho) = 2^{-1} + 2^{-1}\log_2(2d-2)$.}. This means that all results about R\'enyi entropies, see, e.g., Refs.~\cite{entropy_page_93,mps_verstraete_cirac_2006,area_law_hastings_2007,renyi_mps_schuch_2008,entropy_cft_calabrese_cardy_2009,renyi_afhm_song_2011,eigenstates_keating_linden_wells_2015,eigenstates_rigol_2017,arealaw_vidick_2017,eth_garrison_18,renyi2_brydges_zoller_2019,eigenstate_miao_qiang_2021,eigenstate_miao_qiang_2022,area_anshu_2022,area_renyi_anshu_harrow_soleimanifar_2022,eat_metger_2024,entanglement_childs_2024,area_generalized_kim_kuwahara_2025}, immediately carry over to $\chi(\cdot)$.
 
 A key observation about $\chi$ is that any decomposition of a state with small $\chi$ into a sum of pairwise orthogonal states each with large $\chi$ must contain many terms. In fact, we can give a lower bound involving the harmonic mean: for $v\in \mathbb{R}^k_{>0}$, $\Hm(v) \coloneqq k/(\sum_{i=1}^k 1/v_i)$.

\begin{proposition}\label{prop:coefs_lower_bound}
Let $k,n\in \mathbb{N}$. Let $m, M_1,\dots, M_k>0$. Suppose $\ket{\psi}$ and $\ket{\psi_1},\dots,\ket{\psi_k}$ are $n$-qubit states with
\begin{enumerate}
    \item\label{item:state_cond1} $\ket{\psi} = \sum_{i=1}^k \alpha_i \ket{\psi_i}$ for some $(\alpha_1,\dots,\alpha_k)\in \mathbb{C}^k$,
    \item\label{item:state_cond2} $\braket{\psi_i|\psi_j} = 0$ for all $i\neq j$,
    \item\label{item:state_cond3} $\chi_{A,B}(\ket{\psi}) \leq m$ for some bipartition $A,B\subset [n]$,
    \item\label{item:state_cond4} $\chi_{A,B}(\ket{\psi_i})\geq M_i$ for all $i \in [k]$.
\end{enumerate}

    Then,
    \begin{equation}\label{eq:coefs_lower_bound}
        \sum_{i=1}^k\frac{\abs{\alpha_i}}{\sqrt{M_i}} \, \geq \, \frac{1}{\sqrt{m}}.
    \end{equation}
    In particular,
    \begin{equation}\label{eq:refined_no_terms_lower_bound}
        k\geq \frac{\Hm(M_1,\dots,M_k)}{m}.
    \end{equation} 
\end{proposition}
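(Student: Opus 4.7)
The plan is to translate the state-level hypotheses into matrix-level ones via the bipartition matrices $\Gamma,\Gamma_1,\ldots,\Gamma_k$, prove \cref{eq:coefs_lower_bound} by a spectral-norm triangle inequality, and then upgrade that bound to \cref{eq:refined_no_terms_lower_bound} using Cauchy-Schwarz together with the quantum-state normalization $\sum_i\abs{\alpha_i}^2 = 1$.

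First, I would express each state via its bipartition matrix: write $\ket{\psi_i} = \sum_{x,y}(\Gamma_i)_{x,y}\ket{x}_A\ket{y}_B$, and similarly $\ket{\psi}$ via $\Gamma$. Then \cref{item:state_cond1} yields $\Gamma = \sum_i\alpha_i\Gamma_i$; \cref{item:state_cond2} yields $\tr[\Gamma_i^\dagger\Gamma_j]=0$ for $i\neq j$; and unit normalization of every state yields $\norm{\Gamma}_F = \norm{\Gamma_i}_F = 1$. Consequently $\chi(\Gamma) = 1/\norm{\Gamma}^2$ and $\chi(\Gamma_i) = 1/\norm{\Gamma_i}^2$, so \cref{item:state_cond3,item:state_cond4} become the spectral-norm bounds $\norm{\Gamma} \geq 1/\sqrt{m}$ and $\norm{\Gamma_i} \leq 1/\sqrt{M_i}$. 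Inequality \cref{eq:coefs_lower_bound} then follows directly from the triangle inequality for the spectral norm:
\begin{equation*}
\frac{1}{\sqrt{m}} \,\leq\, \norm{\Gamma} \,\leq\, \sum_{i=1}^k \abs{\alpha_i}\,\norm{\Gamma_i} \,\leq\, \sum_{i=1}^k \frac{\abs{\alpha_i}}{\sqrt{M_i}}.
\end{equation*}

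For \cref{eq:refined_no_terms_lower_bound}, I would square the previous display and apply Cauchy-Schwarz. The extra ingredient is $\sum_i\abs{\alpha_i}^2 = 1$, obtained by evaluating $\braket{\psi|\psi}$ and using the orthonormality of the $\ket{\psi_i}$. This gives
\begin{equation*}
\frac{1}{m} \,\leq\, \Big(\sum_{i=1}^k\frac{\abs{\alpha_i}}{\sqrt{M_i}}\Big)^{\!2} \,\leq\, \Big(\sum_{i=1}^k\abs{\alpha_i}^2\Big)\Big(\sum_{i=1}^k\frac{1}{M_i}\Big) \,=\, \sum_{i=1}^k\frac{1}{M_i},
\end{equation*}
which rearranges to $k \geq \Hm(M_1,\ldots,M_k)/m$. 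There is no substantive obstacle; the noteworthy point is simply that the state normalization $\sum_i\abs{\alpha_i}^2=1$ is exactly what promotes the matrix-level triangle inequality into the sharper harmonic-mean bound, giving a strictly stronger conclusion than naively invoking \cref{prop:stablerank} in the state setting (which would yield only $k \geq \min_i M_i/m$, since $\Hm(M_1,\dots,M_k) \geq \min_i M_i$).
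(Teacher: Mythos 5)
Your proof is correct and follows essentially the same route as the paper's: translate the hypotheses into the bipartition matrices $\Gamma,\Gamma_i$ with unit Frobenius norm, chain $1/\sqrt{m}\leq\norm{\Gamma}\leq\sum_i\abs{\alpha_i}\norm{\Gamma_i}\leq\sum_i\abs{\alpha_i}/\sqrt{M_i}$ via the spectral-norm triangle inequality, and then apply Cauchy--Schwarz with $\sum_i\abs{\alpha_i}^2=1$ to get the harmonic-mean bound. Your closing observation that the normalization is what upgrades the bound beyond the $\min_i M_i/m$ one would get from the generic matrix statement matches the paper's own framing of why the specialization to states is worthwhile.
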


\begin{proof}
Let $A,B\subset [n]$ be such that \hyperref[item:state_cond3]{conditions 3} and \hyperref[item:state_cond4]{4} are satisfied. Write $\chi$ for $\chi_{A,B}$. Write $\Gamma$ and $\Gamma_i$ for the matrices corresponding to $\ket{\psi}$ and $\ket{\psi_i}$ as defined in \cref{def:stable_bond_dim}. By definition, we have $\chi(\ket{\psi}) = \chi(\Gamma)$, $\chi(\ket{\psi_i}) = \chi(\Gamma_i)$, and 
\begin{align}
    \ket{\psi} =&~\sum_{x\in \{0,1\}^A, \, y\in \{0,1\}^B} \Gamma_{x,y}\ket{x,y},
    \\
    \ket{\psi_i} =&~\sum_{x\in \{0,1\}^A, \, y\in \{0,1\}^B} (\Gamma_i)_{x,y}\ket{x,y}.
\end{align}

Since $\norm{\Gamma}_F^2 = \norm{\Gamma_i}_F^2 = 1$, we have $1/\chi(\ket{\psi}) = \norm{\Gamma}^2$ and  $1/\chi(\ket{\psi_i}) = \norm{\Gamma_i}^2$ for all $i\in [k]$.
    
    Therefore, \hyperref[item:state_cond3]{condition 3} gives
    \begin{equation}\label{eq:combine_1}
        \frac{1}{m} \leq \frac{1}{\chi(\ket{\psi})} = \norm{\Gamma}^2;
    \end{equation}
    while \hyperref[item:state_cond1]{conditions 1} and \hyperref[item:state_cond4]{4},  and the triangle inequality, give
    \begin{align}
        \norm{\Gamma}^2=&~\Bigl\|\sum_{i=1}^k \alpha_i\Gamma_i\Bigr\|^2\leq \Bigl(\sum_{i=1}^k \abs{\alpha_i} \norm{\Gamma_i}\Bigr)^2\notag
        \\
        =&~\Bigl(\sum_{i=1}^k \frac{\abs{\alpha_i}}{\sqrt{\chi(\ket{\psi_i})}}\Bigr)^2 \leq \Bigl(\sum_{i=1}^k \frac{\abs{\alpha_i}}{\sqrt{M_i}}\Bigr)^2.\label{eq:combine_2}
    \end{align}
    Chaining \cref{eq:combine_1,eq:combine_2} gives the main inequality of the proposition, \cref{eq:coefs_lower_bound}.

    For the ``in particular'' part, observe that  \hyperref[item:state_cond1]{conditions 1} and \hyperref[item:state_cond2]{2} imply that $\sum_{i=1}^k \abs{\alpha_i}^2 = 1$. Therefore, \cref{eq:refined_no_terms_lower_bound} follows from applying the Cauchy-Schwarz inequality on \cref{eq:coefs_lower_bound} to obtain $\sum_{i=1}^k 1/M_i \geq 1/m$, and then rewriting it using the definition of the harmonic mean.
\end{proof}

\cref{prop:coefs_lower_bound} can be seen as a standalone result on the ``entanglement of superposition''~\cite{entanglement_superposition_linden_2006,entanglement_superposition_gour_2007,entanglement_superposition_gour_roy_2008}, except it uses a different entanglement measure, stable rank. Changing the measure is important: Ref.~\cite[Example 4]{entanglement_superposition_gour_2007} implies that natural analogues of \cref{prop:coefs_lower_bound} with stable rank replaced by (von Neumann) entropy are ``very'' false. By inspection, we see that analogues of \cref{prop:coefs_lower_bound} with stable rank replaced by rank are also ``very'' false \footnote{Suppose $n$ is even. Let $m\coloneqq n/2$ and $L\coloneqq 2^m$. Then consider $\ket{\psi_1} = \sum_{x\in \{0,1\}^m} A_{xx}\ket{x}\ket{x}$ and $\ket{\psi_2} = \sum_{x\in \{0,1\}^m} B_{xx}\ket{x}\ket{x}$, where $A,B\in \mathbb{C}^{L\times L}$ are defined by $A \coloneqq \diag(1/\sqrt{2(L-1)},\dots,1/\sqrt{2(L-1)},1/\sqrt{2})$ and $B \coloneqq \diag(-1/\sqrt{2(L-1)},\dots,-1/\sqrt{2(L-1)},1/\sqrt{2})$. Clearly, $\braket{\psi_1|\psi_2} =0$ and $\rk(A) = \rk(B) = L$, yet $\rk(A+B) = 1$, which can be much smaller than $L$.}.

We proceed to imbue \cref{eq:refined_no_terms_lower_bound} of \cref{prop:coefs_lower_bound} with physical meaning. Suppose $\ket{\psi}$ is an $n$-qubit MPS with bond dimension $D$ that is polynomial in $n$, written $D \leq \poly(n)$. Then, across any bipartition aligned with the MPS chain, we have
\begin{equation}\label{eq:low_m}
  \chi(\ket{\psi}) \leq D \leq \poly(n),  
\end{equation}
because stable rank is at most the rank and the bond dimension is the same as the rank in this setting. 

Now, suppose $\ket{\psi_1},\dots, \ket{\psi_{2^n}}$ are energy eigenstates of a generic $n$-qubit lattice Hamiltonian $H$ satisfying the eigenstate thermalization hypothesis (ETH) \cite{eth_deutsch_91,eth_srednicki_94,eth_deutsch_review_18}. Then, we expect most $\ket{\psi_i}$s to have volume-law entanglement, meaning the min-entropy of their reduced density matrices $\rho_i$ on a sub-lattice of $V \leq n/2$ qubits should be proportional to $V$. More precisely, Ref.~\cite[Section V]{eth_garrison_18} shows that, to leading order, the ETH gives
\begin{equation}\label{eq:Smin_volume_law}
    S_{\min}(\rho_i) = V \cdot s_{\min}(\beta_i) \quad \text{for all $i\in [2^n]$},
\end{equation}
where $s_{\min}(\beta_i)$ is the thermal min-entropy density of the Gibbs state $\propto \exp(-\beta_i H)$ that has expected energy equal to $E_i \coloneqq \bra{\psi_i}H\ket{\psi_i}$. 

Choosing $V$ to be proportional to $n$, we expect $s_{\min}(\beta_i)$ to be roughly constant for most $i$s and so we expect volume-law behavior, i.e., $S_{\min}(\rho_i) \propto n$, for most $i$s. By \cref{eq:srank_minentropy}, this means $\chi(\ket{\psi_i}) \geq 2^{\Omega(n)}$ for most $i$s, and so
\begin{equation}\label{eq:high_HmM}
    \Hm(\chi(\ket{\psi_i}),\dots, \chi(\ket{\psi_{2^n}})) \geq 2^{\Omega(n)}.
\end{equation}

In \cref{app:volume_law}, we formally derive \cref{eq:high_HmM} under the common assumption that the density of states, i.e., the sequence of $E_i$s, follows a Gaussian distribution~\cite{eigenstate_miao_qiang_2021}.

Thus, substituting \cref{eq:low_m,eq:high_HmM} into \cref{eq:refined_no_terms_lower_bound} of \cref{prop:coefs_lower_bound} shows that $\ket{\psi}$ must have nonzero weight on at least $2^{\Omega(n)}/\poly(n)$ energy eigenstates, which is a large number of eigenstates. In other words, the energy spectrum of $\ket{\psi}$ must have large support.

In general, \cref{prop:coefs_lower_bound} places $(n-1)$ constraints on $\ket{\psi}$, one for each cut aligned with the MPS chain. (A cut corresponds to a choice of bipartition.) In practice, we expect the constraint from the \emph{single} cut at the \emph{middle} to subsume or dominate the others because $S_{\min}(\rho_i)$, and hence $\chi(\ket{\psi_i})$, is expected to be maximized at $V = n/2$ --- see \cref{eq:Smin_volume_law} \footnote{The mathematical argument is that the inequality $\sum_{i=1}^k \abs{\alpha_i}/\sqrt{M_i} \, \geq \, 1/\sqrt{m}$ \emph{implies} the inequality $\sum_{i=1}^k \abs{\alpha_i}/\sqrt{M_i'} \, \geq \, 1/\sqrt{m}$ if $M_i\geq M_i'$ for all $i$.}. Henceforth, we will use \cref{prop:coefs_lower_bound} only with respect to the middle cut.

As discussed above, \cref{prop:coefs_lower_bound} places a constraint on the energy spectrum of \emph{every} MPS $\ket{\psi}$ of low bond dimension. However, we may hope to tighten that constraint if $\ket{\psi}$ \emph{additionally} has low expected energy, for example, if $\ket{\psi}$ is the output of DMRG \cite{dmrg_white_92}. The next proposition shows that in the extremal case when $\ket{\psi}$'s expected energy is minimal subject to the constraints in \cref{prop:coefs_lower_bound}, then we can completely characterize its energy spectrum.

\begin{restatable}{proposition}{energymin}\label{prop:energy_min}
    Let $k\in \mathbb{N}$. Let $m,M_1,\dots,M_k>0$. Let $E_1,\dots,E_k \in \mathbb{R}$ be such that $E_1\leq E_2\leq \dots\leq E_k$. Let $l\in [k]$ be the size of the set $\{i\in [k] \colon E_i = E_1\}$.
    
    Suppose
    \begin{alignat}{2}
      &l~<~&&\Hm(M_1,\dots,M_l)/m \label{eq:non_triviality_l},
      \\
      &k~>~&&\Hm(M_1,\dots,M_k)/m. \label{eq:non_triviality_k}
    \end{alignat}
    
    Then, the solution to the program
    \begin{alignat}{2}
        &\mathrm{minimize} \quad &&\sum_{i=1}^k \abs{\alpha_i}^2 E_i\label[prog]{prog:primal}
        \\
        &\mathrm{s.t.} \quad && \alpha \in \mathbb{C}^k, \, \norm{\alpha}_2 = 1\notag
        \\
        &\quad && \sum_{i=1}^k\frac{\abs{\alpha_i}}{\sqrt{M_i}} \, \geq \, \frac{1}{\sqrt{m}}\notag
    \end{alignat}
    is 
    \begin{equation}\label{eq:dual}
        \max_{\nu < E_1} \quad  \nu +\Bigl(m \sum_{i=1}^k \frac{1}{M_i(E_i - \nu)}\Bigr)^{-1};
    \end{equation}
    and the maximum in \cref{eq:dual} is uniquely attained at the unique solution $\nu^* < E_1$ to
    \begin{equation}\label{eq:dual_optimum_condition}
    \frac{1}{m} \sum_{i=1}^k \frac{1}{M_i(E_i-\nu^*)^2} = \Big(\sum_{i=1}^k\frac{1}{M_i(E_i-\nu^*)}\Bigr)^2.
    \end{equation}

    Moreover, every minimizer $\alpha\in \mathbb{C}^k$ of \cref{prog:primal} obeys:
    \begin{equation}\label{eq:inv_square}
        \abs{\alpha_i}^2 \propto \frac{1}{M_i(E_i-\nu^*)^2} \quad \text{for all $i\in [k]$.}
    \end{equation}
\end{restatable}

\begin{proof}
    See \cref{app:proof_energymin}, where we prove the result using convex duality~\cite[Chapter 5]{cvx_boyd_vandenberghe_2004}.
\end{proof}

Note that \cref{eq:non_triviality_l,eq:non_triviality_k} ensure nontriviality:
\begin{enumerate}[leftmargin=*]
    \item If $l\geq \Hm(M_1,\dots,M_l)/m$, then the minimum is simply $E_1$, attained at any $\alpha$ with $\abs{\alpha_i}^2 = M_i^{-1}/\textstyle\sum_{i=1}^l M_i^{-1}$ for all $i\in [l]$ and $\abs{\alpha_i} = 0$ for all $i>l$.
    \item If $k < \Hm(M_1,\dots,M_k)/m$, then there is no feasible $\alpha$. If $k = \Hm(M_1,\dots,M_k)/m$, then the minimum is $m \sum_{i=1}^k E_i/M_i$, attained at any $\alpha$ with $\abs{\alpha_i}^2=m/M_i$ for all $i\in[k]$. 
    (Both statements follow from the Cauchy-Schwarz inequality.)
\end{enumerate}

How should we interpret \cref{prop:energy_min}? The constraint it employs comes from  \cref{eq:coefs_lower_bound} of \cref{prop:coefs_lower_bound}. This constraint applies to every low-bond-dimension MPS $\ket{\psi}$ but it also applies to a \emph{broader} class $\calC$  of ansatz states. Formally, this is because \cref{eq:coefs_lower_bound} is \emph{implied} by the low-bond-dimension condition on $\ket{\psi}$ but is not equivalent to it. (The implication is by \cref{eq:low_m} and \cref{prop:coefs_lower_bound}.) Therefore, the energy spectrum characterized by \cref{eq:inv_square} of \cref{prop:energy_min} yields an expected energy at most that of the actual spectrum of minimal-energy MPSs. This means we expect the tail of the former to place a \emph{floor} (or lower bound) on the tail of the latter. Now, \cref{eq:inv_square} characterizes $p_i\coloneqq |\alpha_i|^2$ by an inverse-square law in terms of $E_i$. Therefore, if the $M_i$s are reasonably well-behaved, this would result in a flat tail when the $p_i$s are plotted on a logarithmic scale against the $E_i$s. This would then explain the main empirical finding of Ref.~\cite{unusual_white_2025} that the actual tail does not decay exponentially. (More recently, Ref.~\cite{chen_temperature_2025} also observes flat-looking tails, but the authors speculate that those tails are in fact slightly sloped.)

While \cref{prop:energy_min} predicts a flat tail, that  tail may sit at lower \emph{height} than the actual tail if $\calC$ fails to capture other pertinent properties of MPSs. One such property is that the overlap between an MPS $\ket{\psi}$ of bond dimension at most $D$ and the ground state $\ket{\psi_1}$ is bounded by the sum of the top $D$ eigenvalues of its reduced density matrix $\rho_1$, that is,
\begin{equation}\label{eq:gs_overlap_bound}
    \abs{\alpha_1}^2 \coloneqq \abs{\braket{\psi|\psi_1}}^2 \leq \textstyle\sum_{i=1}^D \lambda_i(\rho_1) \eqqcolon \Lambda(D),
\end{equation}
where $\lambda_i(\cdot)$ denotes the $i$th largest eigenvalue. \cref{eq:gs_overlap_bound} follows from the Eckart-Young theorem~\cite{lowrK_eckart_young_1936} --- see, e.g., Ref.~\cite[Lemma 10 of arXiv version]{area_renyi_anshu_harrow_soleimanifar_2022}.

To leverage \cref{eq:gs_overlap_bound}, we can set the value of $m$ in \cref{prop:energy_min} to be such that the optimizer $\alpha$ satisfies $\abs{\alpha_1}^2 = \Lambda(D)$. This can be seen as a theoretically-grounded tightening of $\calC$ around MPSs that incorporates additional knowledge about the system, namely, $\Lambda(D)$. It may be reasonable to assume some knowledge of $\Lambda(D)$ because the sequence of $\lambda_i(\rho_1)$s is known as the ground-state \emph{entanglement spectrum}, which has been extensively studied since its introduction by Li and Haldane~\cite{entanglement_li_haldane_2008}.

\paragraph{Case study: 2D AFHM.}

We use \cref{prop:energy_min} to predict spectra and compare them with that from DMRG~\cite{dmrg_white_92}
for the main system considered in Ref.~\cite{unusual_white_2025}: the spin-$1/2$ antiferromagnetic Heisenberg model (AFHM) on an $\ell\times \ell$ grid with periodic boundary conditions (PBC). We present the predictions that result from both using and not using \cref{eq:gs_overlap_bound}, where $\Lambda(D)$ is computed numerically. Following Ref.~\cite{unusual_white_2025}, we focus on the largest $\ell$ that is computationally feasible, $\ell=4$.

The system Hamiltonian on $n \coloneqq \ell^2 = 16$ spins is
\begin{equation}
    H = \frac{1}{4}\sum_{\langle i, j \rangle} X_i X_{j} + Y_i Y_{j} + Z_i Z_{j},
\end{equation}
where the sum ranges over distinct sets $\{i,j\}\subseteq [n]$ such that $i$ and $j$ are neighbors on the grid with PBC, and 
$X \coloneqq \begin{psmallmatrix}
    0 & 1
    \\
    1 &0
\end{psmallmatrix}$,  $Y\coloneqq \begin{psmallmatrix}
    0 & -i
    \\
    i &0
\end{psmallmatrix}$, and $Z\coloneqq \begin{psmallmatrix}
    1 & 0
    \\
    0 & -1
\end{psmallmatrix}$.

We apply \cref{prop:energy_min} with the following parameters:
\begin{enumerate}[leftmargin=*]
    \item When not using \cref{eq:gs_overlap_bound}, $m$ is set to $\chi(\ket{\psi(D)})$, where $\ket{\psi(D)}$ is the MPS approximate ground state of $H$ computed by DMRG with bond dimension $D$ via the TeNPy library \cite{tenpy2024}. When using \cref{eq:gs_overlap_bound}, $m$ is set such that $\abs{\alpha_1}^2 = \Lambda(D)$. The resulting values of $m$ are given below to $4$~significant figures (s.f.).

    \begin{table}[ht]
        \begin{tabular*}{0.925\linewidth}{@{\extracolsep{\fill}} c c c @{}}
            \toprule
            & \multicolumn{2}{c}{$m$ value} \\
            \cmidrule(lr){2-3}
            $D$ & \cref{eq:gs_overlap_bound} used & \cref{eq:gs_overlap_bound} unused \\
            \midrule
            $50$  & $0.02885$ & $1.778$ \\
            $100$ & $0.1074$  & $1.843$ \\
            $150$ & $0.3254$  & $1.863$ \\
            \bottomrule
        \end{tabular*}
    \end{table}
    
    \item For all $i \in [2^n]$, let $\ket{\psi_i}$ be the $i$th energy eigenstate of $H$; set $E_i$ to be its energy and set $M_i \coloneqq \chi(\ket{\psi_i})$. In \cref{fig:2d_comparison_entropy}, we plot $\log_2(M_i) = S_{\min}(\tr_A[\ketbrasame{\psi_i}])$ and the entanglement entropy of $\ket{\psi_i}$, i.e., $S_1(\tr_A[\ketbrasame{\psi_i}])$. Their behaviors are similar, agreeing with ETH that predicts bulk eigenstates  should have larger entanglement (volume law) than edge eigenstates (area law) \cite{eth_garrison_18,eigenstate_miao_qiang_2021,eigenstate_miao_qiang_2022}; also see proof of \cref{eq:high_HmM} in \cref{app:volume_law}.
\end{enumerate} 

\begin{figure}
    \includegraphics[width=0.875\linewidth]{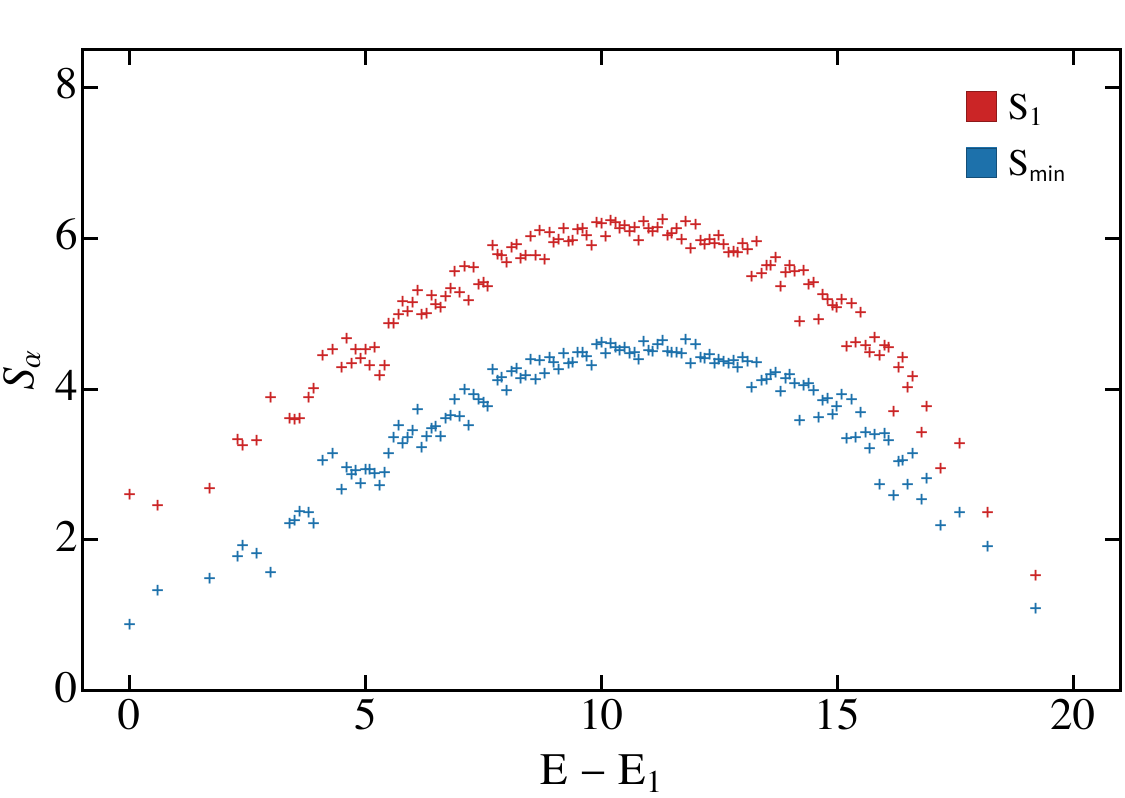}
    \captionsetup{justification=raggedright,singlelinecheck=false}
    \caption{%
    The entanglement min-entropy ($S_{\min}$) and entropy ($S_1$) of eigenstates of the 2D AFHM on $n=16$ spins. Each data point represents the mean of all $S_{\alpha}$ values corresponding to $(E-E_1)$ values within a bin of the form $[0.1j-0.05,0.1j+0.05)$, where $j\in \mathbb{Z}$. All $S_{\alpha}$ values are theoretically at most $\log_2(2^{n/2}) = 8$.
    }
    \label{fig:2d_comparison_entropy}
\end{figure}

In \cref{fig:flat_spectrum}, we plot the predicted and actual energy spectra of the DMRG ground state $\ket{\psi(D)}$, computed with maximum bond dimension $D\in \{50, 100, 150\}$.

\begin{figure}[H]
  \vspace{7pt}
  \begin{subfigure}[b]{0.475\textwidth}
    \begin{overpic}[width=\linewidth, trim = 0 60 0 0, clip]{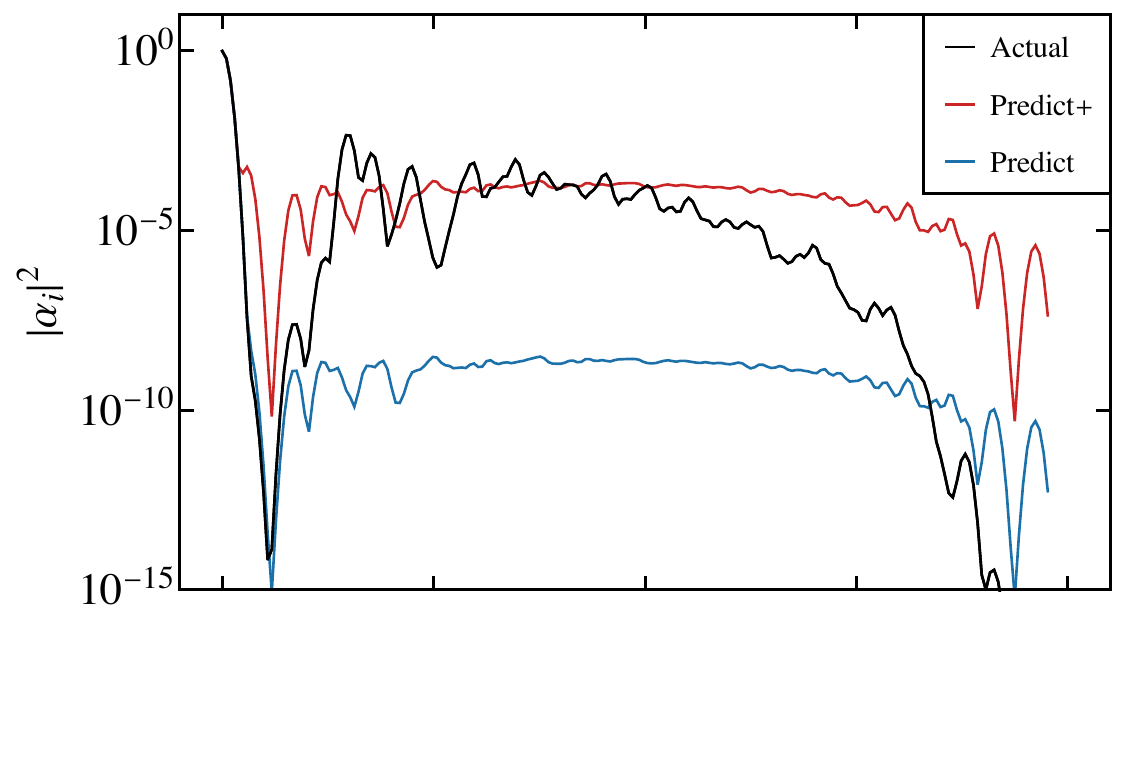}
      \put(3,52.85){\small\bfseries (a)}
    \end{overpic}
    \label{fig:D50}
  \end{subfigure}

  \vspace{-15pt}
    
  \begin{subfigure}[b]{0.475\textwidth}
    \begin{overpic}[width=\linewidth]{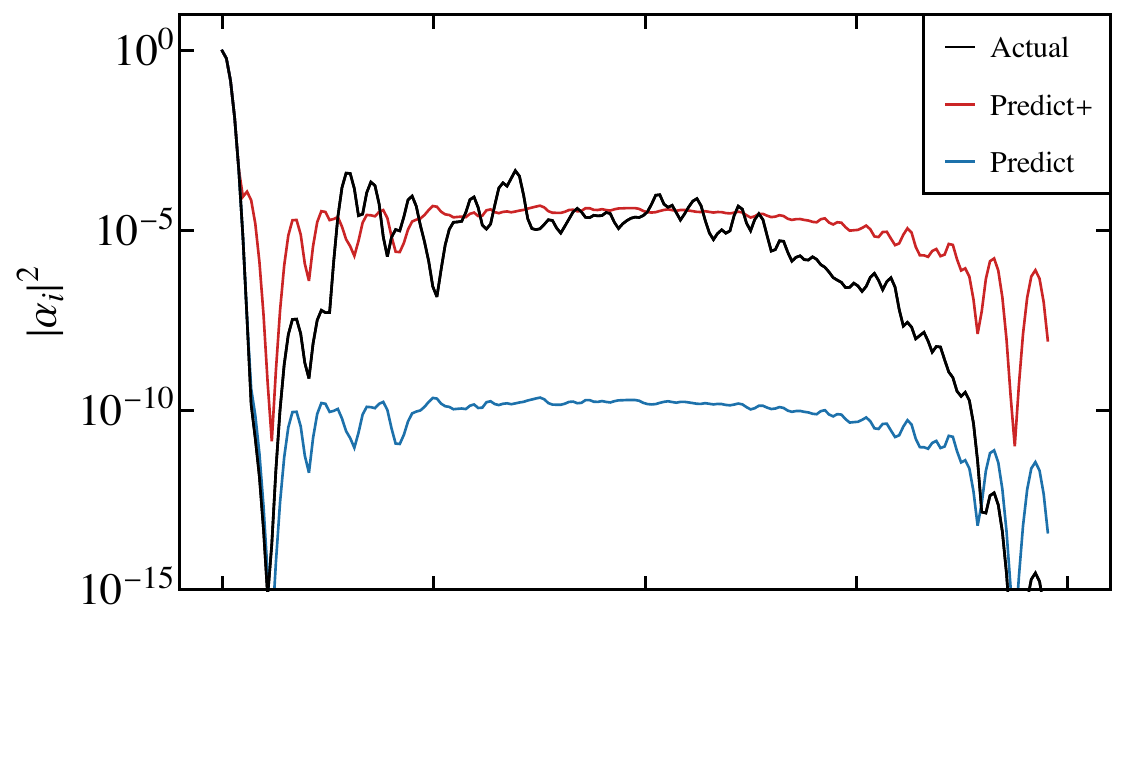}
      \put(3,64){\small\bfseries (b)}
    \end{overpic}
    \label{fig:D100}
  \end{subfigure}
  \vspace{-42pt}
    
  \begin{subfigure}[b]{0.475\textwidth}
    \begin{overpic}[width=\linewidth]{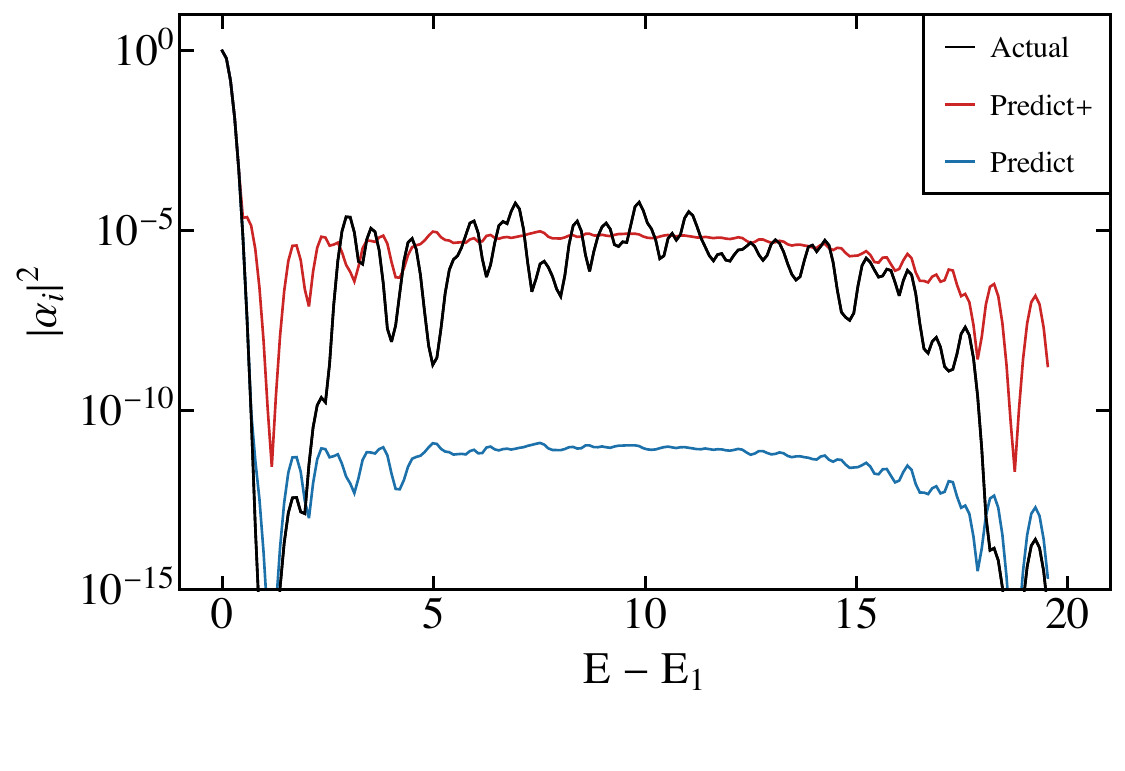}
      \put(3,64){\small\bfseries (c)}
    \end{overpic}
    \label{fig:D150}
  \end{subfigure}
  \vspace{-25pt}
  \captionsetup{justification=raggedright,singlelinecheck=false}
  \caption{%
    Actual versus predicted energy spectra for bond dimension (a) $D=50$, (b) $D=100$, and (c) $D=150$. ``Actual'' denotes the actual spectra of $\ket{\psi(D)}$; ``Predict+'' denotes the predicted spectra when \cref{eq:gs_overlap_bound} is used; ``Predict'' denotes the predicted spectra when \cref{eq:gs_overlap_bound} is unused;  All spectra are broadened by Gaussians with width $0.1$.
  }
  \label{fig:flat_spectrum}
\end{figure}

\cref{fig:flat_spectrum} shows that the predicted spectra exhibit flat tails with heights that decrease with $D$, consistent with the actual spectra \footnote{The predicted and actual spectra in \cref{fig:flat_spectrum} also agree in that they have large dips and peaks at the same locations. This agreement is an artifact of Gaussian broadening and gaps in the eigenvalues of the Hamiltonian.}. As the predicted spectra are based on \cref{prop:energy_min}, the figure supports the claim that the proposition \emph{explains} such flat tails. 

The predicted tail heights agree with the actual tail heights when \cref{eq:gs_overlap_bound} is used, but deviate when it is not. This is because the predicted $p_1\coloneqq \abs{\alpha_1}^2$ values are close to the actual values when \cref{eq:gs_overlap_bound} is used, but deviate when it is not. The relevant values of $p_1$ are given below to $4$~s.f. By definition, $p_1=\Lambda(D)$ when \cref{eq:gs_overlap_bound} is used.

\begin{table}[H]
    \renewcommand{\arraystretch}{1.3} 
    \begin{tabular*}{0.925\linewidth}{@{\extracolsep{\fill}} c c c c @{}}
        \toprule
        & \multicolumn{3}{c}{$p_1$ value} \\ 
        \cmidrule(lr){2-4} 
        $D$ & Actual & \cref{eq:gs_overlap_bound} used ($\Lambda(D)$) & \cref{eq:gs_overlap_bound} unused \\
        \midrule
        $50$  & $0.9879$      & $0.9918$  & $1.000$ \\
        $100$ & $0.9977$      & $0.9983$  & $1.000$  \\
        $150$ & $0.9996$ & $0.9997$ & $1.000$        \\
        \bottomrule
    \end{tabular*}
\end{table}

\paragraph{Energy spectra and quantum advantage.}

Suppose we know the energy spectrum of $\ket{\psi}$ as a function of some compression parameter $m>0$. Then we can calculate the smallest $m$ at which $\ket{\psi}$'s overlap with the ground-energy subspace is ``good'' (call $m_Q$) and at which $\ket{\psi}$'s expected energy is ``good'' (call $m_C$). Since good overlap is necessary for good energy, we have $m_C\geq m_Q$.

To estimate the ground state energy to good accuracy, quantum algorithms can apply phase estimation \cite{qpe_kitaev_1995}, or methods in, e.g., Refs.~\cite{filtering_poulin_2009,gsp_ge_2019,filtering_lin_tong_2020,qet_lin_2022,lowdepth_qpe_wang_2023,filtering_ding_2024}, to any $\ket{\psi}$ that has good overlap.  In contrast, classical algorithms must use a $\ket{\psi}$ that has good energy due to the BQP-completeness of the guided local Hamiltonian problem \cite{dequantize_gharibian_2023,improved_glhp_cade_2023,guide_waite_2025} --- else classical algorithms would be able to efficiently solve any problem in BQP, which is generally believed to be false. 

Therefore, the ratio $m_C/m_Q$ can detect super-polynomial quantum advantage if we assume that the complexity of preparing (a classical description of) $\ket{\psi}$ in both the quantum and classical computational models scales as $\poly(m)$.  This assumption is motivated by its validity when $\ket{\psi}$ is an MPS and $m$ is taken as the maximum bond dimension of $\ket{\psi}$ \cite{mps_schon_2007,state_prep_malz_2024,state_prep_fomichev_2024,state_prep_berry_2025}. Since $m_C$ and $m_Q$ are both calculated from the energy spectrum of a compressed quantum state, this highlights the central role of that spectrum in assessing quantum advantage.

In \cref{app:two_level}, we apply the above reasoning and \cref{prop:energy_min} to two-level Hamiltonians. Such Hamiltonians are analytically tractable and we show that they cannot admit super-polynomial quantum advantage if $m$ can be taken as the stable rank of $\ket{\psi}$.

\paragraph{Conclusion.}

This work gives an analytical framework for predicting the energy spectra of compressed quantum states. The framework explains why such spectra have large support under physically-motivated assumptions. It also explains the main empirical finding of Ref.~\cite{unusual_white_2025} that the spectra of MPSs do not decay exponentially. Further study of energy spectra is well motivated as they enable a principled method for assessing quantum advantage.

\paragraph{Acknowledgments.}

This work is supported by NSERC Grants CRC-2023-00039, RGPIN-2024-06493, and DGECR-2024-00113.

\paragraph{Data availability.}

The data that support the findings of this work are openly available \cite{unusual_spectra_github}.

\vspace{-5pt}
\bibliography{references}

\appendix
\onecolumngrid
\newpage

\section{Volume law for entanglement min-entropy of typical eigenstates}\label{app:volume_law}

\begin{proposition}
Under the setup described in the main text, we have
\begin{equation}
    \Hm(\chi(\ket{\psi_i}),\dots, \chi(\ket{\psi_{2^n}})) \geq 2^{\Omega(n)}.
\end{equation}
\end{proposition}

Recall that the two assumptions we will make are the \emph{eigenstate thermalization hypothesis (ETH)} and a \emph{Gaussian density-of-states}. Under these assumptions, similar results to the proposition are known when the min-entropy (implicit in $\chi$) is replaced by the  von Neumann entropy; see, e.g., the section on entanglement distribution in Ref.~\cite{eigenstate_miao_qiang_2021}, where justification is also given for assuming a Gaussian density-of-states.

\begin{proof}
    Write $N \coloneqq 2^n$.
    Using the definition of the harmonic mean and the fact that $\chi(\ket{\psi_i}) = 2^{S_{\min}(\rho_i)}$, we see that the proposition is equivalent to 
    \begin{equation}
        \frac{2^n}{2^{-S_{\min}(\rho_1)} + \dots + 2^{-S_{\min}(\rho_{N})}} \geq 2^{\Omega(n)}.
    \end{equation}

    Thus it suffices to show there exists constant $\gamma<1$ such that 
    \begin{equation}
        2^{-S_{\min}(\rho_1)} + \dots + 2^{-S_{\min}(\rho_{N})} \leq 2^{\gamma n},
    \end{equation}
    which, under the ETH, is equivalent to 
    \begin{equation}
        2^{-V s_{\min}(\beta_1)} + \dots + 2^{-V s_{\min}(\beta_{N})} \leq 2^{\gamma n},
    \end{equation}
    where $V= c n$ for some constant $c\in (0,1/2]$.
    
    By the definition of $s_{\min}$, showing the last equation is equivalent to showing
    \begin{equation}\label{eq:volume_law_wts}
        2^{-c S_{\min}(\rho(\beta_1))} + \dots + 2^{-c S_{\min}(\rho(\beta_{N}))} \leq 2^{\gamma n},
    \end{equation}
    where $\rho(\beta_i)$ denotes the Gibbs state $\propto \exp(-\beta_i H)$ that has expected energy equal to $E_i$.

    Arrange the $E_i$s so that $E_1 \leq E_2 \leq \dots \leq E_N$. Then for $\beta \in [-\infty, \infty]$,
    \begin{equation}\label{eq:min_entropy_beta}
        S_{\min}(\rho(\beta)) = \frac{1}{\ln 2}\Bigl( \ln\bigl({\textstyle \sum_{i=1}^{N}} \exp(-\beta E_i)\bigr) +  \min_i \beta E_i\Bigr) = \begin{cases}
            \log_2\bigl({\textstyle \sum_{i=1}^{N}} \exp(\beta (E_1-E_i))\bigr) &\text{if $\beta > 0$},
            \\
            n &\text{if $\beta = 0$},
            \\
            \log_2\bigl({\textstyle \sum_{i=1}^{N}} \exp(\beta (E_N-E_i))\bigr) &\text{if $\beta <0$}.
        \end{cases}
    \end{equation}

    \textbf{(}For intuition, the Gibbs state at inverse-temperature $\beta = 0$ describes a uniform distribution over the $\ket{\psi_i}$s, which has expected energy $\sum_i E_i/N$. When the $E_i$s follow a Gaussian distribution, we expect  most $E_i$s to be close to their mean, i.e., $\sum_i E_i/N$. This then means we expect $\beta_i\approx 0$ for most $i$s. Therefore, \cref{eq:min_entropy_beta} implies $S_{\min}(\rho(\beta_i)) \approx n$ for most $i$s. Therefore, the left-hand side of \cref{eq:volume_law_wts} is approximately $N \cdot 2^{-cn} = 2^{(1-c)n}$ which is indeed at most $2^{\gamma n}$ for constant $\gamma \coloneqq (1-c)$. We now proceed to formalize this intuition.\textbf{)}
    
    For each $j\in [N]$, the defining equation for $\beta_j$ is 
    \begin{equation}
        \tr[H\rho(\beta_j)] = E_j, \quad \text{i.e.,} \quad \sum_{i=1}^N E_i \cdot \frac{\exp(-\beta_j E_i)}{\sum_{k=1}^N \exp(-\beta_j E_k)} = E_j,
    \end{equation}
    which can be written as
    \begin{equation}\label{eq:beta}
        \sum_{i=1}^N (E_i - E_j) \exp(-\beta_j E_i) = 0.
    \end{equation}

    To solve the above equation for $\beta_j$, we will use the assumption that the $E_i$s follow a Gaussian. In fact, we can further assume with loss of generality that the $E_i$s follow a \emph{standard} Gaussian with mean $0$ and variance $1$. To see this, observe that if all the $E_i$s are scaled by $a$ and then shifted by $b$ for some real $a,b$ with $a\neq 0$, i.e., $E_i \mapsto E_i' \coloneqq a E_i + b$, the new solution for $\beta_j$ becomes $\beta_j' \coloneqq \beta_j/a$. However, under these transformations, the values of $S_{\min}(\rho(\beta_j))$ will not change, as can be seen directly from  \cref{eq:min_entropy_beta}.

    Then, in the large $N$ limit, we may rewrite \cref{eq:beta} as 
    \begin{equation}
        \frac{N}{\sqrt{2\pi}}\int_{-\infty}^{\infty} (x - E_j) \exp(-\beta_j x) \exp(-x^2/2)\, \mathrm{d}x = 0,
    \end{equation}
    which solves to $\beta_j = - E_j$.
    
    Then, using \cref{eq:min_entropy_beta}, the left-hand-side of \cref{eq:volume_law_wts} can be expressed as
    \begin{equation}
        \LHS \coloneqq \sum_{j\in [N]\colon E_j < 0} \Bigl(\sum_{i=1}^{N} \exp(-E_j (E_1-E_i))\Big)^{-c} + \sum_{j\in [N]\colon E_j \geq 0} \Bigl(\sum_{i=1}^{N} \exp(-E_j (E_N-E_i))\Big)^{-c}.
    \end{equation}
    Since the $E_i$s follow a Gaussian distribution, which is symmetric about $0$, the two summands above must equal and, in the large $N$ limit, we have
    \begin{equation}
         \LHS  = 2 N \int_{x = 0}^{\infty} \Bigl(N \int_{y=-\infty}^{\infty} \exp(-x (E_N - y)) \cdot p(y) \, \mathrm{d}y \Bigr)^{-c} \, p(x) \, \mathrm{d}x,
    \end{equation}
    where $p(\cdot)$ is the probability density function of the standard Gaussian.

    By direct calculation (using Mathematica, say), we have
    \begin{align}
        \LHS=&~2 N^{1-c} \int_{x = 0}^{\infty} \Bigl(\exp(-x E_N +x^2/2) \Bigr)^{-c} \, p(x) \, \mathrm{d}x
        \notag
        \\
        =&~2 N^{1-c} \frac{1}{\sqrt{2\pi}}\int_{x = 0}^{\infty} \exp(cx E_N -cx^2/2) \, \exp(-x^2/2) \, \mathrm{d}x
        \notag
        \\
        =&~\frac{2N^{1-c}}{\sqrt{c+1}} \exp\Bigl(\frac{c^{2}E_N^{2}}{2(c+1)}\Bigr) \Phi\Bigl(\frac{cE_N}{\sqrt{c+1}}\Bigr), \label{eq:LHS_simplified}
    \end{align}
    where $\Phi$ is the cumulative density function of the standard Gaussian.

    Now, since the $E_i$s follow a standard Gaussian, their values can be determined by the condition that the $N+1$ intervals 
    \begin{equation}
        (-\infty, E_1], \, (E_1, E_2], \dots, (E_{N-1},E_N], \, (E_N, \infty)
    \end{equation}
    should all carry equal probability mass, $1/(N+1)$, under the standard Gaussian distribution.

    Let $Z$ denote a standard Gaussian random variable. Then, for $z>0$, Chernoff's bound gives
    \begin{equation}
        \Pr[Z \geq z] \leq \exp(-z^2/2).
    \end{equation}
    Choosing $z = \sqrt{2\cdot \ln(2N)}$, we deduce
    \begin{equation}
         \Pr[Z \geq \sqrt{2\cdot \ln(2N)}] \leq \frac{1}{2N} < \frac{1}{N+1},
    \end{equation}
    and therefore 
    \begin{equation}\label{eq:EN_bound}
        E_N \leq  \sqrt{2\cdot \ln(2N)}.
    \end{equation}

    Substituting \cref{eq:EN_bound} into \cref{eq:LHS_simplified} and using $\Phi(\cdot) \leq 1$, we obtain constants $c'>0$ and $\gamma <1$ such that 
    \begin{equation}\label{eq:volume_law_wts_done}
        \LHS \leq c' \cdot N^{1-c+c^2/(c+1)} \leq c' \cdot N^{1-c/2} = c'\cdot  2^{(1-c/2)n} \leq 2^{\gamma n}, 
    \end{equation}
    where the second inequality uses $c^2/(c+1) \leq c/2$ for $c\in (0,1/2]$. 

    \cref{eq:volume_law_wts_done} completes the proof because it is equivalent to the desired \cref{eq:volume_law_wts}.
\end{proof}

\section{\texorpdfstring{Proof of Proposition 2}{}}
\label{app:proof_energymin}

\energymin*

The following proof uses duality theory for convex optimization; see Ref.~\cite[Chapter 5]{cvx_boyd_vandenberghe_2004} for all necessary background.

\begin{proof}[Proof of \cref{prop:energy_min}.] We write $p_i\coloneqq \abs{\alpha_i}^2$. Then the minimization problem is equivalent to 
    \begin{alignat}{2}
        &\mathrm{minimize} \quad &&\sum_{i=1}^k p_i E_i\label{eq:primal_copy}
        \\
        &\mathrm{s.t.} \quad && \sum_{i=1}^k\sqrt{\frac{p_i}{M_i}} \, \geq \, \frac{1}{\sqrt{m}}\notag
        \\
        &\quad && \forall i\in [k], \ p_i\geq 0\notag
        \\
        &\quad && \sum_{i=1}^k p_i = 1.\notag
    \end{alignat}
    
    This is a convex optimization problem with domain $\mathbb{R}_{\geq 0}^k$. Observe that the nontriviality condition on $k$, \cref{eq:non_triviality_k}, implies that $p \coloneqq Z^{-1}\cdot(1/M_1,\dots, 1/M_k)\in \mathbb{R}_{>0}^k$, where $Z\coloneqq \sum_{i=1}^k 1/M_i$, is a strictly feasible solution. Therefore, Slater's condition is satisfied, which implies that the dual optimum equals the primal optimum given by \cref{eq:primal_copy}. We proceed to show that the dual optimum can be written as \cref{eq:dual}.
    
    The associated Lagrangian $L\colon \mathbb{R}^k_{\geq0} \times \mathbb{R}^{k+1} \times \mathbb{R} \to \mathbb{R}$ is defined by 
    \begin{align}
        L(p,\lambda,\nu) &\coloneqq \sum_{i=1}^k p_i E_i - \lambda_0\Bigl(
        \sum_{i=1}^k \sqrt{\frac{p_i}{M_i}} - \frac{1}{\sqrt{m}} \Bigr) - \sum_{i=1}^k \lambda_i p_i -\nu\Bigl(\sum_{i=1}^k p_i -1 \Bigr).
    \end{align}
    
    The dual function $g\colon \mathbb{R}^{k+1}\times \mathbb{R} \to \mathbb{R}\cup \{-\infty\}$ is defined by 
    \begin{equation}
        g(\lambda,\nu) \coloneqq \min_{p \in \mathbb{R}^k_{\geq0}} L(p,\lambda,\nu).
    \end{equation}
    To compute $g$, we first rewrite
    \begin{align}
        g(\lambda,\nu) =&~\nu + \frac{\lambda_0}{\sqrt{m}} + \sum_{i=1}^k \min_{p_i\geq 0}\Big(p_i(E_i-\lambda_i-\nu) - \lambda_0\sqrt{\frac{p_i}{M_i}}\Bigr).\label{min_pi}
    \end{align}
    We analyze according to the following cases:
    \begin{enumerate}[leftmargin=*]
        \item Case $E_i-\lambda_i-\nu<0$ for some $i\in[k]$: $g(\lambda,\nu) = -\infty$. 
        \item Case $E_i-\lambda_i-\nu\geq 0$ for all $i\in[k]$:
        \begin{enumerate}[leftmargin=*]
            \item If $\lambda_0\leq 0$, then $g(\lambda,\nu) = \nu + \lambda_0/\sqrt{m}$. 
            \item If $\lambda_0>0$ and $E_i-\lambda_i-\nu = 0$, then $g(\lambda,\nu) = -\infty$.
            \item If $\lambda_0>0$ and $E_i-\lambda_i-\nu > 0$, then we can explicitly solve the minimization problem in \cref{min_pi} for each $p_i$ independently to obtain
            \begin{equation}\label{eq:pi_form}
                p_i = \frac{\lambda_0^2}{4M_i(E_i-\lambda_i-\nu)^2} \quad \text{ for all $i\in[k]$},
            \end{equation}
            which yields
            \begin{equation}
                g(\lambda,\nu) = \nu + \frac{\lambda_0}{\sqrt{m}} - \sum_{i=1}^k \frac{\lambda_0^2}{4M_i(E_i-\lambda_i-\nu)}.
            \end{equation}
        \end{enumerate}
    \end{enumerate}

    Therefore, the dual problem, i.e.,
    \begin{alignat}{2}
        &\mathrm{maximize} \quad &&g(\lambda,\nu)
        \\
        &\mathrm{s.t.} \quad && \lambda \in \mathbb{R}^{k+1}_{\geq 0},\notag
    \end{alignat}
    can be expressed as
    \begin{alignat}{2}
        &\mathrm{maximize} \quad &&\nu + \frac{\lambda_0}{\sqrt{m}} - \sum_{i=1}^k \frac{\lambda_0^2}{4M_i(E_i-\lambda_i-\nu)}\label{eq:dual_with_lambdai}
        \\
        &\mathrm{s.t.} \quad && \lambda \in \mathbb{R}^{k+1}_{\geq 0}\notag
        \\
        & \quad &&\forall i\in [k], \, E_i-\lambda_i - \nu > 0;\notag
    \end{alignat}
    which, using $E_1 = \min_{i=1}^k E_i$, is equivalent to 
    \begin{alignat}{2}
        &\mathrm{maximize} \quad &&\nu + \frac{\lambda_0}{\sqrt{m}} - \sum_{i=1}^k \frac{\lambda_0^2}{4M_i(E_i-\nu)}\label{eq:dual_without_lambdai}
        \\
        &\mathrm{s.t.} \quad && \lambda_0 \in \mathbb{R}_{\geq 0}\notag
        \\
        & \quad &&\nu < E_1;\notag
    \end{alignat}
    which, upon maximizing the quadratic in $\lambda_0$ for fixed $\nu$, is equivalent to
    \begin{equation}\label{eq:dual_copy}
        \max_{\nu < E_1} \quad  \nu +\Bigl(m \sum_{i=1}^k \frac{1}{M_i(E_i - \nu)}\Bigr)^{-1}.
    \end{equation}

    To see that the maximization in \cref{eq:dual_copy} is uniquely attained at the unique solution $\nu^*<E_1$ satisfying \cref{eq:dual_optimum_condition}, it suffices to show that the function $h\colon (-\infty,E_1)\to \mathbb{R}$ defined by 
   \begin{equation}
       h(\nu) \coloneqq \nu +\Bigl(m \sum_{i=1}^k \frac{1}{M_i(E_i - \nu)}\Bigr)^{-1}
   \end{equation}
    satisfies
    \begin{enumerate}
        \item $h'(\nu^*) = 0$ is equivalent to \cref{eq:dual_optimum_condition};
        \item $h$ is strictly concave;
        \item $\lim_{\nu \rightarrow E_1^-}h'(\nu) < 0$; and
        \item $\lim_{\nu \rightarrow -\infty}h'(\nu) > 0$.
    \end{enumerate}
    
    We prove each item in turn. For convenience, for $p\in \mathbb{N}$ and $\nu<E_1$, we write
   \begin{equation}
       S_p(\nu) \coloneqq \sum_{i=1}^k \frac{1}{M_i(E_i-\nu)^p}.
   \end{equation}

    \begin{enumerate}
    \item To see $h'(\nu^*) = 0$ is equivalent to \cref{eq:dual_optimum_condition}, observe that
   \begin{equation}
       h'(\nu) = 1 - \frac{1}{m}\frac{S_2(\nu)}{S_1(\nu)^2},
   \end{equation} 
   so $h'(\nu^*) = 0$ is clearly equivalent to \cref{eq:dual_optimum_condition}.
    
    \item To see $h$ is strictly concave, observe that
   \begin{equation}
       h''(\nu) = -\frac{2}{m}\frac{\, S_1(\nu) \, S_3(\nu) - S_2(\nu)^2}{S_1(\nu)^3}.
   \end{equation}
   Then observe that for each fixed $\nu<E_1$, $S_2(\nu)^2 \leq S_1(\nu)S_3(\nu)$ by the Cauchy-Schwarz inequality and that equality holds if and only if $E_i = E_1$ for all $i\in [k]$, that is, $l=k$. But $l=k$ is forbidden by combining \cref{eq:non_triviality_l,eq:non_triviality_k}, so we must have $S_2(\nu)^2 < S_1(\nu)S_3(\nu)$.
   Therefore, $h''(\nu)<0$ for all $\nu<E_1$, so $h$ is strictly concave. 

    \item To see $\lim_{\nu \rightarrow E_1^-}h'(\nu) < 0$, observe that
   \begin{align}
       \lim_{\nu \rightarrow E_1^-}h'(\nu) =&~1 - \frac{1}{m\cdot \sum_{i=1}^l M_i^{-1}} = 1 - \frac{\Hm(M_1,\dots,M_l)}{m\cdot  l}<0,
   \end{align}
   where the last inequality uses the nontriviality condition on $l$, \cref{eq:non_triviality_l}.

    \item To see $\lim_{\nu \rightarrow -\infty} h'(\nu) >0$, observe that 
   \begin{equation}
       h'(-\nu) = 1 - \frac{1}{m}\frac{\sum_{i=1}^k M_i^{-1}(1+E_i/\nu)^{-2}}{\bigl(\sum_{i=1}^k M_i^{-1}(1+E_i/\nu)^{-1}\big)^2},
   \end{equation}
   and so 
   \begin{align}
      \lim_{\nu \rightarrow -\infty}h'(\nu) =&~ 
      \lim_{\nu \rightarrow \infty}h'(-\nu) = 1 - \frac{1}{m}\frac{1}{\sum_{i=1}^k M_i^{-1}} = 1 -\frac{1}{k} \frac{\Hm(M_1,\dots,M_k)}{m} > 0,
   \end{align}
   where the last inequality uses the nontriviality condition on $k$, \cref{eq:non_triviality_k}.
    \end{enumerate}

   The ``moreover'' part of the proposition follows from \cref{eq:pi_form} and the fact that the dual optimizer has $\lambda_i =0$ for all $i>0$, as can be seen from the equivalence between \cref{eq:dual_with_lambdai} and \cref{eq:dual_without_lambdai}.
\end{proof}

\section{Two-level Hamiltonians}\label{app:two_level}

We illustrate how a characterization of the energy spectra of a compressed quantum state $\ket{\psi}$ can be used to assess the quantum advantage for ground state energy estimation via $\ket{\psi}$. As a concrete example, we consider Hamiltonians with exactly two distinct energy levels, in which case we can \emph{exactly solve} \cref{eq:dual_optimum_condition} of \cref{prop:energy_min} to obtain the predicted energy spectra.

Let $H$ be any $n$-qubit Hamiltonian having exactly two distinct energy levels $\xi_1$ and $\xi_2$, with $\Delta \coloneqq \xi_2 - \xi_1 > 0$.  Write $k\coloneqq 2^n$. For $i\in [k]$, write $\ket{\psi_i}$ for the $i$th energy eigenstate of $H$, and write $E_i$ for its energy and $M_i$ for $\chi(\ket{\psi_i})$. Write $\ket{\psi} \coloneqq \sum_{i=1}^k \alpha_i\ket{\psi_i}$, where the $\alpha_i$s are a minimizer of \cref{prog:primal}, and write $m$ for $\chi(\ket{\psi})$. 

Write
\begin{equation}
    a_1~\coloneqq~\sum_{i\in [k]\colon E_i = \xi_1} \frac{1}{M_i}, \quad a_2~\coloneqq~\sum_{i\in [k]\colon E_i = \xi_2} \frac{1}{M_i}.
\end{equation}

Write 
\begin{align}
    E~\coloneqq&~\bra{\psi}H\ket{\psi},
    \\[1.5ex]
    p~\coloneqq&~ \sum_{i\in [k]\colon E_i = \xi_1}\abs{\braket{\psi|\psi_i}}^2,
\end{align}
so that $E$ is the expected energy of $\ket{\psi}$ and $p$ is the overlap of $\ket{\psi}$ with the ground state energy subspace.

We assume the nontriviality conditions, \cref{eq:non_triviality_l,eq:non_triviality_k}, hold. In this case, it will be convenient to parametrize $m$ using a real parameter $\mu \in (0,1)$ as 
$m = (a_1+\mu a_2)/(a_1(a_1+a_2))$,
where the condition $\mu\in (0,1)$ is equivalent to the nontriviality conditions.

Then, solving \cref{eq:dual_optimum_condition} of \cref{prop:energy_min}, we derive
\begin{align}
    p~=&~\frac{(a_1+\sqrt{\mu} \,  a_2)^2}{(a_1+a_2)(a_1+\mu a_2)},\label{eq:pmin}
    \\[1ex]
    E~=&~\xi_1 + \Delta \cdot (1-p).\label{eq:Emin}
\end{align}

\cref{eq:pmin,eq:Emin} can be used to assess quantum advantage for the problem $\calP$ of estimating the ground state energy of $H$, i.e., $\xi_1$, to \emph{constant additive precision} in the case $\norm{H}$ and $\Delta$ are both extensive with $n$, i.e., $\norm{H},\Delta=\poly(n)$. The notation $\poly(n)$ refers to some polynomial in $n$ and different occurrences of $\poly(n)$ may refer to different polynomials. $\calP$ is a quantitative version of the ground state energy estimation problem that is of interest in quantum physics and chemistry~\cite{dequantize_gharibian_2023,improved_glhp_cade_2023,guide_waite_2025}. (Refs.~\cite{dequantize_gharibian_2023,improved_glhp_cade_2023,guide_waite_2025} follow the convention of using a normalized $H$, i.e., $\norm{H} = 1$, so the problem they consider of ground state energy estimation to $1/poly(n)$ additive precision translates to our $\calP$, which concerns \emph{constant} additive precision, under our convention of $\norm{H} = \poly(n)$.) 

Quantum algorithms can always solve $\calP$ by preparing a $\ket{\psi}$ with $p \geq 1/\poly(n)$ and applying phase estimation \cite{qpe_kitaev_1995}, or methods in, e.g., Refs.~\cite{filtering_poulin_2009,gsp_ge_2019,filtering_lin_tong_2020,qet_lin_2022,lowdepth_qpe_wang_2023,filtering_ding_2024}. On the other hand, classical algorithms can always solve $\calP$ by preparing a (classical description of) $\ket{\psi}$ with $E \leq \xi_1 + O(1)$, which translates to $p \geq 1-O(1/\Delta) = 1-1/\poly(n)$ via \cref{eq:Emin}, and applying expected energy estimation. (Refs.~\cite{dequantize_gharibian_2023,improved_glhp_cade_2023,guide_waite_2025} provide evidence that classical algorithms \emph{cannot} always solve $\calP$ using a $\ket{\psi}$ with $p \geq 1/\poly(n)$ by showing that this problem is BQP-complete.) 

The dominant cost in solving $\calP$ is that of preparing $\ket{\psi}$ in both the quantum and classical cases. We assume that the cost of preparing $\ket{\psi}$ scales as $\poly(mn)$. This assumption is motivated by its validity when $\ket{\psi}$ is an MPS and $m$ is replaced by the maximum bond dimension of $\ket{\psi}$ \cite{mps_schon_2007,state_prep_malz_2024,state_prep_fomichev_2024,state_prep_berry_2025}. Then, quantum advantage can be assessed by the ratio $\poly(m_C/m_Q)$ up to a multiplicative factor of $\poly(n)$, where $m_Q$ is the smallest value at which $p \geq 1/\poly(n)$ and $m_C$ is the smallest value at which $p \geq  1-1/\poly(n)$.

Our primary objective is to assess whether super-polynomial quantum advantage exists, so it suffices to define quantum advantage by $m_C/m_Q$. The values of $m_Q$ and $m_C$ can be calculated by inverting \cref{eq:pmin} to express $m$ in terms of $p \in (a_1/(a_1+a_2),1)$ as:
\begin{equation}\label{eq:m}
m =
1 \Big/\bigl( \, \sqrt{\vphantom{a_2(1-p)} a_1 p} + \sqrt{a_2 (1 - p)} \, \bigr)^2.
\end{equation}

Unfortunately, the functional form of \cref{eq:m} precludes super-polynomial quantum advantage as can be seen by the following case analysis.
\begin{enumerate}[leftmargin=*]
    \item Case $a_1 \geq a_2$. In this case,
    \begin{equation*}
        a_1^{-1}(\sqrt{\vphantom{1-p}p}+\sqrt{1-p})^{-2}\leq m \leq a_1^{-1} p^{-1}.
    \end{equation*}
    Therefore,
    \begin{align*}
        \frac{m_C}{m_Q} \leq\frac{(1-1/\poly(n))^{-1}}{(\sqrt{1/\poly(n)} + \sqrt{1- 1/\poly(n)})^{-2}}\leq O(1).
    \end{align*}\vspace{1ex}
    \item Case $a_1 \leq a_2$. In this case,
    \begin{equation*}
        a_2^{-1}(\sqrt{\vphantom{1-p}p}+\sqrt{1-p})^{-2}\leq m \leq a_2^{-1} (1-p)^{-1}.
    \end{equation*}
    Therefore,
    \begin{align*}
        \frac{m_C}{m_Q} \leq\frac{(1/\poly(n))^{-1}}{(\sqrt{1/\poly(n)} + \sqrt{1- 1/\poly(n)})^{-2}}\leq \poly(n).
    \end{align*}
\end{enumerate}

However, this analysis is for two-level Hamiltonians only. It would be interesting for future work to discover more sophisticated Hamiltonians with $M_i$ and $E_i$ profiles that could lead to super-polynomial quantum advantage.

\end{document}